\pgfplotsset{compat=1.15}
\newcommand{\noiseguess}[1]{z^{n,{#1}}}
\newcommand{\rvnoise}{N^n}
\newcommand{\noise}{z^n}
\newcommand{\rvchanout}{R^n}
\newcommand{\chanout}{r^n}
\newcommand{\rvdemodout}{Y^n}
\newcommand{\demodout}{y^n}
\newcommand{\codebook}{\mathcal{C}}
\newcommand{\codeword}{c^n}
\newcommand{\cleaned}{\hat{x}^n}
\newcommand{\rvword}{X^n}
\newcommand{\binaryAlphabetN}{\{0,1\}^n}
\newcommand{\realNumbers}{\mathbb{R}}
\newcommand{\guessworkFun}{G}
\newcommand{\guesswork}{\guessworkFun(\rvnoise)}
\newcommand{\mlcodeword}{c^{n,*}}
\newcommand{\secondMostLikelycodeword}{c^{n,**}}
\newcommand{\decodelist}{\mathcal{L}}
\newcommand{\eBCH}{\text{eBCH}}
\newcommand{\RLC}{\text{RLC}}
\newcommand{\CRC}{\text{CRC}}
\acrodef{grand}[GRAND]{Guessing Random Additive Noise Decoding}
\acrodef{llr}[LLR]{log-likelihood ratio}
\acrodef{ldpc}[LDPC]{low-density parity check}
\acrodef{rlc}[RLC]{random linear code}
\acrodef{bch}[BCH]{Bose-Chaudhuri–Hocquenghem}
\acrodef{orbgrand}[ORBGRAND]{Ordered Reliability Bits GRAND}
\acrodef{awgn}[AWGN]{Additive White Gaussian Noise}
\acrodef{cawgn}[CAWGN]{Complex Additive White Gaussian Noise}
\acrodef{bpsk}[BPSK]{Binary Phase-Shift Keying}
\acrodef{harq}[HARQ]{Hybrid Automatic Repeat Request}
\acrodef{crc}[CRC]{Cyclic Redundancy Check}
\acrodef{snr}[SNR]{Signal-to-Noise Ratio}
\acrodef{uer}[UER]{undetected error rate}
\acrodef{er}[ER]{erasure rate}
\acrodef{bler}[BLER]{block error rate}
\acrodef{cascl}[CA-SCL]{CRC-Assisted Successive Cancellation List}
\def\A{ A }
\def  \P{{P}}
\def\PA{{\P(A)}}
\def\PHitIncorrect{\phi}
\newcommand{\B}[1]{{B_{#1}}}
\newcommand{\PB}[1]{{\P(B_{#1})}}
\newcommand{\W}[1]{{W_{(#1)}}}
\newcommand{\Wq}[2]{{W_{(#1), #2}}}
\newcommand{\Wij}[2]{{W_{(#1)}^{#2}}}
\newcommand{\qij}[2]{{q_{#1}^{#2}}}
\newcommand{\qiLj}[1]{{q_1^{L,\{#1\}}}}
\newtheorem{theorem}{Theorem}
\newtheorem{corollary}{Corollary}
\title{Upgrade error detection to prediction with GRAND}
\author{
\IEEEauthorblockN{Kevin Galligan}
\IEEEauthorblockA{\textit{Hamilton Institute}\\
\textit{Maynooth University, Ireland}\\
kevin.galligan.2020@mumail.ie}
\and
\IEEEauthorblockN{Peihong Yuan and Muriel M\'edard}
\IEEEauthorblockA{\textit{Research Laboratory for Electronics} \\
\textit{Massachusetts Institute of Technology}\\
Cambridge, USA \\
$\{$phyuan,medard$\}$@mit.edu}
\and
\IEEEauthorblockN{Ken R. Duffy}
\IEEEauthorblockA{\textit{Dept. of ECE \& Dept. Mathematics} \\
\textit{Northeastern University}\\
Boston, USA \\
k.duffy@northeastern.edu}
}
\date{\today}
\begin{document}
\maketitle

\begin{abstract}
\ac{grand} is a family of hard- and soft-detection error correction decoding algorithms that provide accurate decoding of any moderate redundancy code of any length. Here we establish a method through which any soft-input GRAND algorithm can provide soft output in the form of an accurate a posteriori estimate of the likelihood that a decoding is correct or, in the case of list decoding, the likelihood that the correct decoding is an element of the list. Implementing the method adds negligible additional computation and memory to the existing decoding process. The output permits tuning the balance between undetected errors and block errors for arbitrary moderate redundancy codes including CRCs.
\end{abstract}

\begin{IEEEkeywords}
GRAND, soft input, soft output
\end{IEEEkeywords}

\section{Introduction}
For any channel coding scheme, it would be desirable if error correction decoders could produce soft output in the form of a confidence measure in the correctness of a decoded block. Soft output could be used to to make control decisions such as retransmission requests or to tag blocks as erasures for an erasure-correcting code to rectify \cite{lin_error_2004}. A common method for establishing a binary measure of decoding confidence is to append a \ac{crc} to a transmitted message \cite{hashimoto1997_performance,sauter2023_error} prior to error correction encoding that can be used post-decoding to assess consistency. When the block length is large, the addition of a CRC has a negligible effect on the code's rate. One of the goals of modern communications standards, such as 3GPP 5G \cite{3gpp.38.212}, however, is ultra-reliable low-latency communication (URLLC), which requires the use of short packets \cite{durisi2016_toward}. The addition of a \ac{crc} to short packets has a significant effect on the code's rate, and so alternative solutions to evaluate decoding confidence are a topic of active interest, e.g. \cite{sauter2023_error}.

In seminal work on error exponents, Forney \cite{forney1968_exponential} proposed an approximate computation of the correctness probability of a decoded block. Forney's approach necessitates the use of a list decoder, which significantly restricts its applicability, and we shall show that the approach provides an inaccurate estimate in channels with challenging noise conditions, which is a primary region of interest as the output can be used to trigger or suppress a retransmission request, but its potential utility warranted further investigation, e.g. \cite{Hof2010}. With the recent introduction of CRC-Assisted Polar (CA-Polar) codes to communications standards \cite{3gpp.38.212}, Forney's approximation has received renewed interest \cite{sauter2023_error} as one popular method of decoding CA-Polar codes, \ac{cascl} decoding, generates a list of candidate codewords as part of its execution, e.g. \cite{niu2012crc,tal2015list,balatsoukas2015llr,liang2016hardware}. For convolution or trellis codes, the Viterbi algorithm \cite{forney1973_viterbi} can be modified to produce soft output at the sequence level \cite{raghavan1998_reliability}, which has been used in coding schemes with multiple layers of decoding \cite{hagenauer1989_viterbi} and to inform repeat transmission requests \cite{yamamoto1980_viterbi}\cite{raghavan1998_reliability}. The method we develop can be readily used with any moderate redundancy code, can be evaluated without the need to list decode, and the estimate remains accurate in noisy channel conditions.

\ac{grand} is a recently developed family of code-agnostic decoding algorithms that achieve maximum-likelihood decoding for hard detection \citep{duffy_capacity-achieving_2019,galligan2021,An22} and soft detection \citep{solomon20,Duffy19a,duffy2022_ordered,abbas2021list,Duffy23ORBGRANDAI} channels. \ac{grand} algorithms function by sequentially inverting putative noise effects, ordered from most to least likely according to channel properties and soft information, from received signals. The first codeword yielded by inversion of a noise effect is a maximum-likelihood decoding. Since this procedure does not depend on codebook structure, \ac{grand} can decode any moderate redundancy code. Efficient hardware implementations \cite{riaz2021multicodegrand,Riaz23} and syntheses \cite{condo2021_highperformance, abbas2020grand, abbas2021orbgrand, condo2021fixed} for both hard and  soft-detection settings have been translated into taped out circuits that establish the flexibility and energy efficiency of \ac{grand} decoding strategies.

The soft output measure we develop for \ac{grand} is an extremely accurate estimate of the a posteriori probability that a decoding is correct or, in the case of list decoding, the probability that the correct codeword is in the list. We derive these probabilities for uniform at random codebooks and demonstrate empirically that the resulting formulae continue to provide accurate soft output for structured codebooks. The formulae can be used with any algorithm in the GRAND family so long as soft input is available. Calculating the soft output only requires knowledge of the code's dimensions and that the probability of each noise effect query be accumulated during GRAND's normal operation, so computation of the measure does not increase the decoder's algorithmic complexity or memory requirements. In practical terms, the approach provides accurate soft output for single- or list-decoding of any moderate redundancy code of any length and any structure.

\section{Background}
We first define notation used in the rest of the paper. Let $\codebook$ be a codebook containing $2^k$ binary codewords each of length $n$ bits. Let $\rvword:\Omega \to \codebook$ be a codeword drawn uniformly at random from the codebook and let $\rvnoise : \Omega \to \binaryAlphabetN$ denote the binary noise effect that the channel has on that codeword during transmission; that is, $N^n$ encodes the binary difference between the demodulated received sequence and the transmitted codeword, rather than the potentially continuous channel noise. Then $\rvdemodout = \rvword \oplus \rvnoise$ is the demodulated channel output, with $\oplus$ being the element-wise binary addition operator. Let $\rvchanout:\Omega \to \realNumbers^n$ denote soft channel output.
Lowercase letters represent realizations of random variables, with the exception of $\noise$, which is the realization of $\rvnoise$.

%\subsection{GRAND}
All \ac{grand} algorithms operate by progressing through a series of noise effect guesses $\noiseguess{1}, \noiseguess{2}, \ldots \in \binaryAlphabetN$, whose order is informed by channel statistics, e.g. \cite{An22}, or soft input, e.g. \cite{duffy2022_ordered}, until it finds one, $\noiseguess{q}$, that satisfies $\cleaned_q = \demodout \ominus \noiseguess{q} \in \codebook$, where $\ominus$ inverts the effect of the noise on the channel output. If the guesses are in channel-dependent decreasing order of likelihood, then $\noiseguess{q}$ is a maximum-likelihood estimate of $\rvnoise$ and $\cleaned_q$ is a maximum-likelihood estimate of the transmitted codeword $\rvword$. 
%The random variable $\rvguess_i = \rvdemodout \ominus \noiseguess{i}$ is GRAND's $i$-th guess of the codeword. 
Since this guessing procedure does not depend on codebook structure, \ac{grand} can decode any moderate redundancy code as long as it has a method for checking codebook membership. For a linear block code with an $(n-k)\times n$ parity-check matrix $H$, $\cleaned_q$ is a codeword if $H \cleaned_q = 0^n$ \cite{lin_error_2004}, where $\cleaned_q$ is taken to be a column vector and $0^n$ is the zero vector. To generate a decoding list of size $L$, GRAND continues until $L$ codewords are found \cite{abbas2021list,galligan2023_block}.

Underlying \ac{grand} is a race between two random variables, the number of guesses until the true codeword is identified and the number of guesses until an incorrect codeword is identified. Whichever of these processes finishes first determines whether the decoding identified by \ac{grand} is correct. The guesswork function $\guessworkFun : \binaryAlphabetN \to \{1,\ldots, 2^n\}$, which depends on soft input in the soft detection setting, maps a noise effect sequence to its position in GRAND's guessing order, so that $\guessworkFun(\noiseguess{i})=i$. Thus $\guessworkFun(\rvnoise)$ is a random variable that encodes the number of guesses until the transmitted codeword would be identified. If $\W{i} : \Omega \to \{1,\ldots, 2^n-1\}$ is the number of guesses until the $i$-th incorrect codeword is identified, not accounting for the query that identifies the correct codeword, then \ac{grand} returns a correct decoding whenever $\guesswork \leq \W{1}$ and a list of length $L$ containing the correct codeword whenever $\guesswork \leq \W{L}$. Analysis of the race between these two processes leads to the derivation of the soft output in this paper.

Forney's work on error exponents \cite{forney1968_exponential} resulted in an approximation for probabilistic soft output. Given channel output $\chanout$ and a maximum-likelihood decoding output $\mlcodeword \in \codebook$, the probability that the decoding is correct is
\begin{align*}
P(\rvword=\mlcodeword|\rvchanout=\chanout) = \frac{P(\rvchanout=\chanout|\rvword=\mlcodeword)}{\sum_{\codeword\in\codebook} P(\rvchanout=\chanout|\rvword=\codeword)}.
\end{align*}
Based on this formula, Forney derived an optimal threshold for determining whether a decoding should be marked as an erasure. Computing the sum in the formula is infeasible for codebooks of practical size, so Forney 
%observed that the sum is typically dominated by the two most likely codewords, and hence 
suggested that, given the second most likely codeword, $\secondMostLikelycodeword \in \codebook$, the correctness probability be approximated by
\begin{align}
\frac{P(\rvchanout=\chanout|\rvword=\mlcodeword)}{P(\rvchanout=\chanout|\rvword=\mlcodeword) + P(\rvchanout=\chanout|\rvword=\secondMostLikelycodeword)},
\label{formula:forney}
\end{align}
which is necessarily no smaller than $1/2$. More generally, given a decoding list $\decodelist \subseteq \codebook$, the denominator can be replaced by $\sum_{\codeword \in \decodelist} P(\rvchanout=\chanout|\rvword=\codeword)$ resulting in an estimate of the correctness probability that is no smaller than 1/$\lvert \decodelist \rvert$. Having the codewords of highest likelihood in the decoding list will give the most accurate approximation as their likelihoods dominate the sum. One downside of this approach is that it requires a list of codewords, which most decoders do not provide. For this reason, a method has recently been proposed to estimate the likelihood of the second-most likely codeword given the first \cite{freudenberger2021_reduced}. A variety of alternative schemes have also been suggested for making erasure decisions, a summary of which can be found in \cite{hashimoto1999_composite}.

\section{GRAND Soft Output}
Throughout this section, we shall assume that the codebook, $\codebook$, consists of $2^k$ codewords drawn uniformly at random from $\binaryAlphabetN$, although the derivation generalises to higher-order symbols. For GRAND algorithms we first derive exact expressions, followed by readily computable approximations, for the probability that the transmitted codeword is not in a decoding list and, as a corollary, that a single-codeword GRAND output is incorrect. In Section \ref{sec:results} we demonstrate the formulae provide excellent estimates for structured codebooks. 

\begin{theorem}[A posteriori likelihood of an incorrect GRAND list decoding for a uniformly random codebook]
\label{thm:list}
Let $G(N^n)$ be the number of codebook queries until the noise effect sequence $N^n$ is identified.
Let $W_1,\ldots,W_{2^k-1}$ be selected uniformly at random without replacement from $\{1,\ldots,2^n-1\}$ and 
define their rank-ordered version $\W{1}<\cdots<\W{2^k-1}$. With the true noise effect not counted, $\W{i}$ corresponds to the location in the guesswork order of the $i$-th erroneous decoding in a codebook constructed uniformly-at-random. 
Define the partial vectors $\Wij{i}{j} = (\W{i},\ldots,\W{j})$ for each $i\leq j\in\{1,\ldots,2^k-1\}$

Assume that a list of $L\geq 1$ codebooks are identified by a GRAND decoder at query numbers $q_1<\ldots<q_L$. Define the associated partial vectors $\qij{i}{j} = (q_i,\ldots,q_{j})$ for each $i\leq j\in\{1,\ldots,2^k-1\}$, and 
\begin{align}
\qiLj{i} = (q_1,\ldots,q_{i-1},q_{i+1}-1,\ldots,q_L-1),
\label{eq:qomit}
\end{align}
which is the vector $\qij{1}{L}$ but with the entry $q_i$ omitted and one subtracted for all entries from $q_{i+1}$ onwards.
Define
\begin{align*}
\PA = \P(G(N^n)>q_L)\P(\Wij{1}{L}=\qij{1}{L}),
\end{align*}
which is associated with the transmitted codeword not being in the list,
and, for each $i\in\{1,\ldots,L-1\}$,
\begin{align*}
\PB{i} &= \P(G(N^n)=q_i)\P(\Wij{1}{L-1}=\qiLj{i}), 
\end{align*}
which is associated with the transmitted codeword being the $i$-th element of
the list,
and
\begin{align*}
\PB{L} & = \P(G(N^n)=q_L)\P(\Wij{1}{L-1}=\qij{1}{L-1},\W{L}\geq q_L),
\end{align*}
which is associated with the transmitted codeword being the final element of the
list. Then the probability that the correct decoding 
is not in the list is
\begin{align}
\frac{\PA}{\PA + \sum_{i=1}^L \PB{i}}.
\label{eq:app}
\end{align}
\end{theorem}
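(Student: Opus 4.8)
The plan is to argue conditional on the soft channel output $r^n$, which fixes the GRAND guessing order and hence the law of $G(N^n)$, the query index at which the transmitted codeword would be recovered. Write $\mathcal{E}$ for the observed event that the first $L$ codewords surface at query indices $q_1 < \cdots < q_L$. The target is $P(\text{transmitted codeword}\notin\text{list}\mid\mathcal{E})$, so the natural route is Bayes' rule applied to a disjoint decomposition of $\mathcal{E}$: let $A$ be ``$\mathcal{E}$ and the transmitted codeword is not in the list'' and, for $i=1,\ldots,L$, let $B_i$ be ``$\mathcal{E}$ and the transmitted codeword is the $i$-th list entry.'' These are mutually exclusive and exhaust $\mathcal{E}$, so $A\subseteq\mathcal{E}$ gives $P(A\mid\mathcal{E}) = P(A)/\bigl(P(A)+\sum_{i=1}^L P(B_i)\bigr)$, which is exactly the claimed ratio once the theorem's $P(A)$ and $P(B_i)$ are identified with the probabilities of these events. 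The work then reduces to (i) pinning down the joint law of the true and erroneous query positions, and (ii) rewriting $A,B_1,\ldots,B_L$ in terms of that law.

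For step (i) the uniform-at-random codebook hypothesis is what I would invoke. Conditioning on the transmitted codeword and the channel output determines the true noise $N^n$ and its index $G(N^n)$, and leaves the remaining $2^k-1$ codewords uniform among the sequences distinct from the transmitted one. Since an incorrect codeword $c$ is revealed at query $G(y^n\ominus c)$ and $G$ is a bijection, the $2^k-1$ erroneous indices form a uniformly random $(2^k-1)$-subset of $\{1,\ldots,2^n\}\setminus\{G(N^n)\}$. Deleting the true index and relabelling the survivors $1,\ldots,2^n-1$ produces $W_{(1)}<\cdots<W_{(2^k-1)}$, a uniform ordered sample without replacement from $\{1,\ldots,2^n-1\}$ whose law is free of $G(N^n)$. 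This supplies both the independence that lets the joint probabilities factor and the bookkeeping convention that an absolute index $a$ becomes relative index $a$ when $a<G(N^n)$ and $a-1$ when $a>G(N^n)$.

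For step (ii) I would translate each scenario into these relative coordinates. If the transmitted codeword is not in the list it is found only after $q_L$, so $G(N^n)>q_L$ and all of $q_1,\ldots,q_L$ keep their indices, giving $A=\{G(N^n)>q_L\}\cap\{(W_{(1)},\ldots,W_{(L)})=(q_1,\ldots,q_L)\}$ and the factorised $P(A)$. If the transmitted codeword is the $i$-th entry then $G(N^n)=q_i$, the other $L-1$ members are erroneous, and removing the true position shifts every index above $q_i$ down by one, so their relative positions are precisely $q_1^{L,\{i\}}$ of \eqref{eq:qomit}, yielding $P(B_i)$ for $i<L$. The case $i=L$ needs extra care: here $q_1,\ldots,q_{L-1}$ all lie below the true position and keep their indices, but one must additionally forbid an erroneous codeword between $q_{L-1}$ and $q_L$, which is the constraint $W_{(L)}\geq q_L$; for $i<L$ this constraint is automatic because $W_{(L)}>W_{(L-1)}=q_L-1$, explaining the asymmetry in the statement.

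The \emph{main obstacle} I anticipate is not the Bayesian skeleton, which is routine, but the combinatorial accounting in step (ii): justifying the ``one subtracted for all entries from $q_{i+1}$ onwards'' shift, verifying that each event equals (rather than merely is contained in) the intersection written in relative coordinates, and confirming that the ``no intervening erroneous codeword'' requirement is either automatic or captured by the single inequality $W_{(L)}\geq q_L$. Establishing the conditional uniform-without-replacement structure of the erroneous positions and its independence from $G(N^n)$ is the other place where the uniform codebook assumption must be used explicitly rather than taken for granted.
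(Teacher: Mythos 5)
Your proposal is correct and follows essentially the same route as the paper's proof: a Bayes-rule ratio over the disjoint events $A, B_1,\ldots,B_L$, factorization via the independence of $G(N^n)$ from the erroneous-codeword positions, and the same relative/absolute index bookkeeping (the paper phrases the shift as $\W{i} + 1_{\{\W{i}\geq q\}}$, you phrase it as the inverse relabelling, and both yield $\qiLj{i}$ and the observation that $\W{L}\geq q_L$ is automatic for $i<L$). The only addition is that you explicitly derive the uniform-without-replacement law of the erroneous positions and its independence from $G(N^n)$ from the random-codebook assumption, which the paper builds into the model statement itself.
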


\begin{proof}
For $q\in\{1,\ldots,2^n\}$, define $\Wq{i}{q} = \W{i} + 1_{\{\W{i}\geq q\}}$,
so that any $\W{i}$ that is greater than or equal to $q$ is incremented by one. Note that $\Wq{i}{G(N^n)}$ encodes the locations of erroneous codewords in the guesswork order of a randomly constructed codebook given the value of $G(N^n)$ and, in particular, $\Wq{i}{G(N^n)}$ corresponds the number of queries until the $i$-th incorrect codeword is found given $G(N^n)$.

We identify the event that the decoding is not in the list as
\begin{align*}
\A = \left\{G(N^n)>q_L,\Wij{1}{L}=\qij{1}{L}\right\}
\end{align*}
and the events where the decoding is the $i$-th element of the list by
\begin{align*}
\B{i} = &\left\{\Wij{1}{i-1}=\qij{1}{i-1}, G(N^n)=q_i, \right. \\
        & \left. \Wij{i}{L-1}+1=\qij{i+1}{L}, \W{L}\geq q_{L} \right\}
\end{align*}
where the final condition is automatically met for $i=\{1,\ldots,L-1\}$ but not for $i=L$.
The conditional probability that a GRAND decoding is not one of the elements in the list given that $L$ elements have been found is
\begin{align}
\P\left(\A\middle|\A\bigcup_{i=1}^L\B{i}\right)
= \left. \P(\A) \middle/ \P\left(\A\bigcup_{i=1}^L\B{i}\right).\right.
%\frac{\P(\A)}{\displaystyle \P\left(\A\bigcup_{i=1}^L\B{i}\right)}
\label{eq:LLR}
\end{align}
As all of the $\A$ and $\B{i}$ events are disjoint, 
to compute eq. \eqref{eq:LLR} it suffices to simplify $\P(\A)$ and $\P(\B{i})$ for $i\in\{1,\ldots,L\}$ to
evaluate the a posteriori likelihood that the transmitted codeword is not in the list.

Consider the numerator,
\begin{align*}
    \P\left(\A\right) 
    &= \P(G(N^n)>q_L, \Wij{1}{L}=\qij{1}{L})\\
    &= \P(G(N^n)>q_L) \P(\Wij{1}{L}=\qij{1}{L}),
\end{align*}
where we have used the fact that $G(N^n)$ is independent of $\Wij{1}{L}$ by construction. In considering the denominator, we need only be concerned with the terms $\P(\B{i})$
corresponding to a correct codebook being identified at query $q_i$, for which 
\begin{align*}
\P(\B{i})
= &\P(G(N^n)=q_i, \\
& \Wij{1}{i-1}=\qij{1}{i-1}, \Wij{i}{L-1}+1=\qij{i+1}{L}, \W{L}\geq q_{L})\\
= & P(G(N^n)=q_i, \Wij{1}{L-1}=\qiLj{i},  \W{L}\geq q_{L})\\
= & \P(G(N^n)=q_i) \P(\Wij{1}{L-1}=\qiLj{i},  \W{L}\geq q_{L}),
\end{align*}
where we have used the definition of $\qiLj{i}$ in eq. \eqref{eq:qomit} and the independence. Thus the conditional probability that the correct answer is not found in eq. \eqref{eq:LLR} is given in eq. \eqref{eq:app}.
\end{proof}

Specializing to a list size $L=1$, the formula in eq. \eqref{eq:app} for the a posteriori likelihood that decoding is incorrect can be expressed succinctly, as presented in the following corollary.

\begin{corollary}[A posteriori likelihood of an incorrect GRAND decoding for a uniformly random codebook]
\label{cor:single}
The conditional probability that a GRAND decoding is incorrect given a codeword is identified on the $q$-th query is
\begin{align*}
\frac{\P(G(N^n)>q) \P(\W{1}=q)}{\P(G(N^n)=q)\P(\W{1}\geq q)+\P(G(N^n)>q) \P(\W{1}=q)}.
\end{align*}
where $\W{1}$ is equal in distribution to the minimum of $2^{k}-1$ numbers selected uniformly at random without replacement from $\{1,\ldots,2^n-1\}$.
\end{corollary}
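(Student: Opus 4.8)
The plan is to obtain the corollary as the direct specialization of Theorem~\ref{thm:list} to list size $L=1$, since then the decoding list holds a single codeword and the event ``the correct codeword is not in the list'' coincides exactly with the event that the single GRAND output is incorrect. With $L=1$ there is precisely one query number, $q_1=q$, at which a codeword is returned, so I would carry out the proof by tracking how $\PA$, the family $\PB{i}$, and their associated partial vectors degenerate under this substitution.

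First I would evaluate the numerator. When $L=1$ the partial vector $\Wij{1}{L}$ collapses to the single coordinate $\W{1}$ and $\qij{1}{L}$ to the scalar $q$, so that $\PA=\P(G(N^n)>q)\,\P(\W{1}=q)$, which already matches the numerator of the claimed expression. Next I would simplify the denominator $\PA+\sum_{i=1}^L\PB{i}$. The terms $\PB{i}$ indexed by $i\in\{1,\ldots,L-1\}$ form an empty family when $L=1$, leaving only $\PB{L}=\PB{1}$. In that surviving term the vectors $\Wij{1}{L-1}$ and $\qij{1}{L-1}$ are both empty, so the equality constraint between them is vacuous and the only remaining condition is $\W{L}\geq q_L$, i.e. $\W{1}\geq q$; the independence of $G(N^n)$ and the $\W{i}$ already used in the theorem then gives $\PB{1}=\P(G(N^n)=q)\,\P(\W{1}\geq q)$.

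Substituting these two expressions into eq.~\eqref{eq:app} yields the stated formula, with the denominator being $\PA+\PB{1}$ up to a harmless reordering of summands. The distributional description of $\W{1}$ as the minimum of $2^k-1$ values drawn uniformly at random without replacement from $\{1,\ldots,2^n-1\}$ is inherited directly from the definition of the rank order $\W{1}<\cdots<\W{2^k-1}$ in Theorem~\ref{thm:list}. No new probabilistic content is needed; the only real care lies in the bookkeeping of the degenerate partial vectors, confirming that the empty equality constraint inside $\PB{1}$ is vacuously satisfied and that no $\PB{i}$ with $i<L$ survives, which is the single point where a reader could stumble.
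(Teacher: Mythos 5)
Your proposal is correct and follows exactly the route the paper intends: the paper states the corollary as the immediate specialization of Theorem~\ref{thm:list} to $L=1$ without spelling out the details, and your bookkeeping---$\PA$ collapsing to $\P(G(N^n)>q)\P(\W{1}=q)$, the family $\PB{i}$, $i<L$, being empty, and $\PB{1}$ reducing to $\P(G(N^n)=q)\P(\W{1}\geq q)$ via the vacuous empty-vector constraint---is precisely that specialization carried out explicitly. The distributional identification of $\W{1}$ is likewise inherited directly from the theorem's construction, so nothing is missing.
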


In order to compute the a posteriori probability of an incorrect decoding in Theorem \ref{thm:list}, we need to evaluate or approximate: 1) $\P(G(N^n)=q)$ and $\P(G(N^n)\leq q)$; and 2)  $\P(\Wij{1}{L}=\qij{1}{L})$ and $\P(\Wij{1}{L-1}=\qij{1}{L-1},\W{L}\geq q_L)$. During a GRAND algorithm's execution, the precise evaluation of 1) can be achieved by calculating the likelihood of each noise effect query as it is made, $\P(G(N^n)=q)=\P(N^n=z^{n,q})$, and retaining a running sum, $\P(G(N^n) \le q)=\sum_{j=1}^q \P(N^n=z^{n,j})$. For 2), geometric approximations whose asymptotic precision can be verified using the approach described in \cite{duffy_capacity-achieving_2019}[Theorem 2] can be employed, resulting in the following corollaries for list decoding and single-codeword decoding, respectively. 

\begin{corollary}[Approximate a posteriori likelihood of an incorrect GRAND list decoding for a uniformly random codebook]
If each $\W{i}$ given $\W{i-1}$ is assumed to be geometrically distributed with probability of success $(2^k-1)/(2^n-1)$, eq. \eqref{eq:app} describing the a posteriori probability that list decoding does not contain the transmitted codeword can be approximated as
\begin{align}
\frac{\displaystyle \left(1-\sum_{j=1}^{q_L} \P(N^n=z^{n,j})\right) \left(\frac{2^k-1}{2^n-1}\right)}
{\displaystyle \sum_{i=1}^L\P(N^n=z^{n,q_i}) + \left(1-\sum_{j=1}^{q_L} \P(N^n=z^{n,j})\right) \left(\frac{2^k-1}{2^n-1}\right) }
\label{eq:app_approxlist}
\end{align}
\end{corollary}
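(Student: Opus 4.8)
The plan is to substitute the assumed geometric gap structure directly into the three distinct order-statistic probabilities that appear in Theorem \ref{thm:list} --- namely $\P(\Wij{1}{L}=\qij{1}{L})$ in the numerator $\PA$, the probability $\P(\Wij{1}{L-1}=\qiLj{i})$ in $\PB{i}$ for $i<L$, and $\P(\Wij{1}{L-1}=\qij{1}{L-1},\W{L}\geq q_L)$ in $\PB{L}$ --- and to combine these with the exact identities $\P(G(N^n)=q_i)=\P(\rvnoise=\noiseguess{q_i})$ and $\P(G(N^n)>q_L)=1-\sum_{j=1}^{q_L}\P(\rvnoise=\noiseguess{j})$ already recorded before the corollary. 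Writing $p=(2^k-1)/(2^n-1)$ for the success probability, the task reduces to evaluating each order-statistic probability in closed form and simplifying the ratio in eq.~\eqref{eq:app}.

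The key step is a telescoping identity. Under the assumption that the gaps $\W{1},\,\W{2}-\W{1},\,\ldots$ are independent and geometric with parameter $p$ (using the convention $\W{0}=0$), the joint probability that the first $m$ rank-ordered values equal an increasing sequence $v_1<\cdots<v_m$ factorises as $\prod_{j=1}^{m}(1-p)^{v_j-v_{j-1}-1}p$, and since $\sum_{j}(v_j-v_{j-1})=v_m$ this collapses to $(1-p)^{v_m-m}p^{m}$, depending only on the largest coordinate $v_m$ and the count $m$. I would apply this three times. For the numerator, $m=L$ and $v_m=q_L$ give $(1-p)^{q_L-L}p^{L}$. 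For $\P(\Wij{1}{L-1}=\qiLj{i})$ the coordinates are $q_1,\ldots,q_{i-1},q_{i+1}-1,\ldots,q_L-1$, so $m=L-1$ and $v_{L-1}=q_L-1$, giving $(1-p)^{(q_L-1)-(L-1)}p^{L-1}=(1-p)^{q_L-L}p^{L-1}$. For the final term I would split off the tail, writing $\P(\Wij{1}{L-1}=\qij{1}{L-1},\W{L}\geq q_L)$ as the order-statistic probability of $q_1,\ldots,q_{L-1}$, namely $(1-p)^{q_{L-1}-(L-1)}p^{L-1}$, times $\P(\W{L}\geq q_L\mid\W{L-1}=q_{L-1})=(1-p)^{q_L-q_{L-1}-1}$, whose product is again $(1-p)^{q_L-L}p^{L-1}$. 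Before invoking the identity for $\qiLj{i}$ I would confirm its coordinates remain strictly increasing, which holds because $q_{i-1}<q_i<q_{i+1}$ forces $q_{i+1}-1>q_{i-1}$.

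With all three evaluated, every $\PB{i}$ for $i\in\{1,\ldots,L\}$ carries the common factor $(1-p)^{q_L-L}p^{L-1}$, while $\PA$ carries $(1-p)^{q_L-L}p^{L}$. I would then substitute the exact $G(N^n)$ expressions and cancel $(1-p)^{q_L-L}p^{L-1}$ from the numerator and denominator of eq.~\eqref{eq:app}. The single surviving factor $p=(2^k-1)/(2^n-1)$ attaches to the $\P(G(N^n)>q_L)$ term, the sum $\sum_{i=1}^L\P(G(N^n)=q_i)$ becomes $\sum_{i=1}^L\P(\rvnoise=\noiseguess{q_i})$, and $\P(G(N^n)>q_L)$ becomes $1-\sum_{j=1}^{q_L}\P(\rvnoise=\noiseguess{j})$, reproducing exactly eq.~\eqref{eq:app_approxlist}.

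The main obstacle --- really the only subtle point --- is the bookkeeping that forces the three order-statistic probabilities to collapse to the \emph{same} $(1-p)$-exponent $q_L-L$ while differing in the power of $p$ by exactly one, since it is this single surplus factor of $p$ that produces the $(2^k-1)/(2^n-1)$ weighting. The definition of $\qiLj{i}$ in eq.~\eqref{eq:qomit}, which omits the entry $q_i$ and subtracts one from every subsequent entry, is precisely what realigns the largest coordinate to $q_L-1$ so that the telescoping lands on $(1-p)^{q_L-L}$ in the $i<L$ case; keeping the conventions $\W{0}=0$ straight and verifying the shifted coordinate vector stays admissible is where I would be most careful.
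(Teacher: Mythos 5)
Your proposal is correct and follows essentially the same route as the paper's proof: it evaluates the same three order-statistic probabilities under the geometric-gap assumption, obtaining $(1-\phi)^{q_L-L}\phi^{L}$ for the numerator term and $(1-\phi)^{q_L-L}\phi^{L-1}$ for each of the $\P(\B{i})$ terms, and then cancels the common factor in eq.~\eqref{eq:app}. The only difference is that you spell out the telescoping computation and the admissibility of the shifted coordinate vector $\qiLj{i}$, details the paper states without derivation.
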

\begin{proof}
Define the geometric distribution's probability of success to be $\PHitIncorrect=(2^k-1)/(2^n-1)$.
Under the assumptions of the corollary, we have the formulae
\begin{align*}
\P\left(\Wij{1}{L}=\qij{1}{L}\right) &= \left(1-\PHitIncorrect\right)^{q_L-L} \PHitIncorrect^L,
\end{align*}
for $i\in\{1,\ldots,L-1\}$
\begin{align*}
\P\left(\Wij{1}{L-1}=\qiLj{i}\right) & = \left(1-\PHitIncorrect\right)^{q_L-L} \PHitIncorrect^{L-1},
\end{align*}
and
\begin{align*}
\P\left(\Wij{1}{L-1}=\qij{1}{L-1},\W{L}\geq q_L\right)
& = \left(1-\PHitIncorrect\right)^{q_L-L} \PHitIncorrect^{L-1}.
\end{align*}
Using those expressions, simplifying  eq. \eqref{eq:app} gives eq. \eqref{eq:app_approxlist}.
\end{proof}

To a slightly higher precision, one can use the following approximation, which accounts for eliminated queries and is most succintly expressed for a single-codeword decoding.
\begin{corollary}[Approximate a posteriori likelihood of an incorrect GRAND decoding for a uniformly random codebook]
If $\W{1}$ is assumed to be geometrically distributed with probability of success $(2^k-1)/(2^n-q)$ after $q-1$ failed queries, eq. \eqref{eq:app} describing the a posteriori probability that a decoding found after $q_1$ queries is incorrect can be approximated as
\begin{align}
\frac{\displaystyle \left(1-\sum_{j=1}^{q_1} \P(N^n=z^{n,j})\right) \frac{2^k-1}{2^n-{q_1}}}
{\displaystyle \P(N^n=z^{n,q_1})+\left(1-\sum_{j=1}^{q_1} \P(N^n=z^{n,j})\right) \frac{2^k-1}{2^n-q_1}}.
\label{eq:app_approx}
\end{align}
\end{corollary}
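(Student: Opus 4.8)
The plan is to obtain eq.~\eqref{eq:app_approx} as a direct specialisation of the exact single-codeword expression in Corollary~\ref{cor:single}, converting its abstract probabilities into the running statistics that a GRAND decoder already maintains and then inserting the stated geometric model for $\W{1}$. First I would write down the Corollary~\ref{cor:single} formula for a decoding found on query $q_1$,
\[
\frac{\P(\guesswork>q_1)\,\P(\W{1}=q_1)}{\P(\guesswork=q_1)\,\P(\W{1}\geq q_1)+\P(\guesswork>q_1)\,\P(\W{1}=q_1)},
\]
so that the whole task reduces to rewriting its four factors.

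Second, I would translate the two guesswork terms into on-the-fly quantities. As noted in the text following the corollaries, each query contributes $\P(\guesswork=q_1)=\P(\rvnoise=\noiseguess{q_1})$, while the running sum gives $\P(\guesswork\le q_1)=\sum_{j=1}^{q_1}\P(\rvnoise=\noiseguess{j})$, whence $\P(\guesswork>q_1)=1-\sum_{j=1}^{q_1}\P(\rvnoise=\noiseguess{j})$. These are precisely the accumulations GRAND performs during normal operation, so no additional computation is incurred.

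Third --- the one step carrying the modelling content --- I would observe that the two $\W{1}$ factors never need to be evaluated individually: dividing numerator and denominator of the displayed ratio by $\P(\W{1}\geq q_1)$ leaves them only through the conditional hazard $\P(\W{1}=q_1\mid\W{1}\geq q_1)=\P(\W{1}=q_1)/\P(\W{1}\geq q_1)$. Under the hypothesis that $\W{1}$ is geometric with success probability $(2^k-1)/(2^n-q)$ after $q-1$ failed queries, this hazard equals $(2^k-1)/(2^n-q_1)$, which is all that enters. Substituting the three expressions and cancelling the common factor then yields eq.~\eqref{eq:app_approx}.

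I do not anticipate a genuine obstacle here: the argument is elementary algebra, and the only point demanding care is recognising that the model need supply nothing beyond the conditional success probability, not the full law of $\W{1}$. As a sanity check on the ``higher precision'' claim it is worth verifying that this hazard coincides with the exact without-replacement hazard of the minimum of $2^k-1$ draws from $\{1,\ldots,2^n-1\}$; relative to the fixed-$\PHitIncorrect$ list corollary, the refinement that ``accounts for eliminated queries'' is then exactly the replacement of the frozen pool size $2^n-1$ by the shrinking size $2^n-q_1$.
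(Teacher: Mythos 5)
Your proposal is correct and takes essentially the same route as the paper: both substitute the stated geometric model into the $L=1$ specialization of the exact result (Corollary~\ref{cor:single}), cancel the common survival factor $\P(\W{1}\geq q_1)$, and replace the guesswork terms by the per-query probability and its running sum, yielding eq.~\eqref{eq:app_approx}. Your hazard-rate phrasing merely makes explicit the cancellation that the paper leaves implicit in ``from which eq.~\eqref{eq:app} simplifies,'' and your closing sanity check is valid, since $(2^k-1)/(2^n-q_1)$ is exactly the without-replacement hazard of the minimum of $2^k-1$ draws from $\{1,\ldots,2^n-1\}$.
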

\begin{proof}
Under the conditions of the corollary,
\begin{align*}
\P(\W{1}=q_1) = \prod_{i=1}^{q_1-1}\left(1-\frac{2^k-1}{2^n-i}\right) \frac{2^k-1}{2^n-q_1},
\end{align*}
from which eq. \eqref{eq:app} simplifies to \eqref{eq:app_approx}.
\end{proof}

\section{Performance Evaluation}
\label{sec:results}
{\it  Accuracy of soft output.} Armed with the approximate a posteriori probabilities in eq. \eqref{eq:app_approxlist} and \eqref{eq:app_approx}, we investigate their precision for random and structured codebooks. Fig. \ref{fig:ble-estimate-changing-noise} depicts the accuracy of formula \eqref{eq:app_approx} when used for random linear codes $\RLC(64,56)$. For context, Forney's approximation with a list size $L\in \{2,4\}$ is also shown. Transmissions were simulated using a \ac{awgn} channel with \ac{bpsk} modulation. \ac{orbgrand} \cite{duffy2022_ordered} was used for soft-input decoding, which produced decoding lists of the appropriate size for both soft output methods. 

Fig. \ref{fig:ble-estimate-changing-noise} plots the empirical \ac{bler} given the  predicted block error probability evaluated using eq. \eqref{eq:app_approx}. If the estimate was precise, then the plot would follow the line $x=y$, as the predicted error probability and the \ac{bler} would match. As RLCs are linear, codewords are not exactly distributed uniformly in the guesswork order, but the formula provides an accurate estimate. In contrast, Forney's approximation significantly underestimates the error probability, degrades in noisier channels, and has an estimate of no greater than $1/L$. Moreover, GRAND's prediction has been made having only identified a single potential decoding.

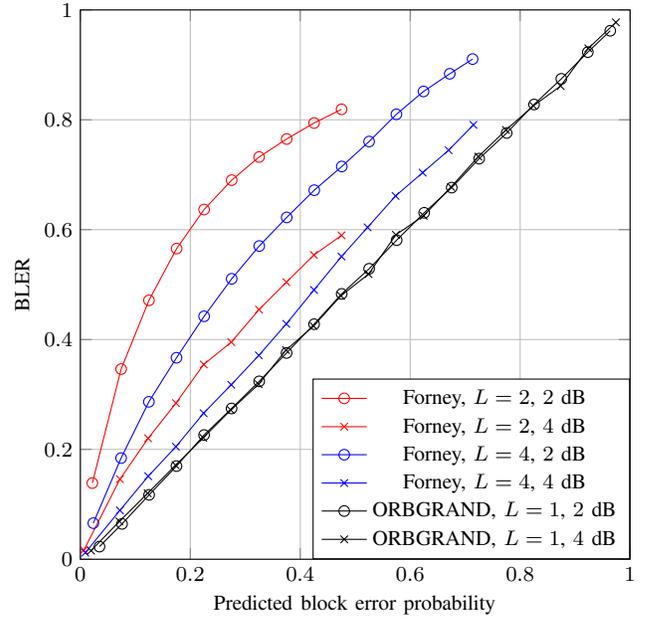
\begin{figure}[t]
\footnotesize
\centering
\begin{tikzpicture}[scale=1]
\begin{axis}[
legend style={at={(1,0)},anchor= south east},
ymin=0,
ymax=1,
width=3.5in,
height=3.5in,
grid=both,
xmin =0,
xmax =1,
xlabel = Predicted block error probability,
ylabel = BLER,
]

%\addplot[gray,dashed]
%table[]{x y
%0 0
%1 1
%};%\addlegendentry{x = y}

\addplot[red, mark=o]
table[]{x y
0.47519 0.81912
0.42522 0.79416
0.37528 0.76504
0.32515 0.73223
0.2752 0.69009
0.2251 0.63663
0.17513 0.56554
0.12483 0.47135
0.0742 0.34618
0.021798 0.13866

};\addlegendentry{Forney, $L=2$, $2~\text{dB}$}

\addplot[red, mark=x]
table[]{x y
0.4751 0.58961
0.42493 0.55391
0.37475 0.50436
0.32495 0.45443
0.27457 0.3953
0.22441 0.35464
0.17397 0.28427
0.12335 0.22014
0.072045 0.14579
0.0063428 0.016489

};\addlegendentry{Forney, $L=2$, $4~\text{dB}$}

\addplot[blue, mark=o]
table[]{x y
0.71349 0.91053
0.67195 0.88375
0.62423 0.85161
0.57492 0.81003
0.52525 0.76041
0.47554 0.71497
0.42547 0.67183
0.37535 0.62226
0.32542 0.57023
0.27526 0.5105
0.22512 0.44225
0.17491 0.36681
0.12464 0.2866
0.074139 0.18418
0.023585 0.065669

};\addlegendentry{Forney, $L=4$, $2~\text{dB}$}

\addplot[blue, mark=x]
table[]{x y
0.71497 0.79082
0.66994 0.74465
0.6226 0.70379
0.57344 0.66135
0.5225 0.60419
0.47506 0.55109
0.42502 0.49022
0.37466 0.42848
0.3247 0.37117
0.27443 0.31767
0.22421 0.26587
0.17385 0.20491
0.12344 0.15111
0.071969 0.088948
0.0088789 0.011883
};\addlegendentry{Forney, $L=4$, $4~\text{dB}$}

\addplot[black, mark=o]
table[]{x y
0.9639 0.96207
0.92316 0.92321
0.87491 0.87459
0.82529 0.82787
0.77545 0.77597
0.72538 0.72926
0.67552 0.67652
0.6254 0.63069
0.57538 0.58086
0.5253 0.5288
0.47536 0.48305
0.42518 0.42789
0.37513 0.37551
0.32523 0.32384
0.27508 0.27446
0.22506 0.22612
0.17489 0.1696
0.12508 0.11723
0.075787 0.064324
0.03476 0.022996

};\addlegendentry{ORBGRAND, $L=1$, $2~\text{dB}$}

\addplot[black, mark=x]
table[]{x y
0.97413 0.97738
0.924 0.93008
0.87367 0.8615
0.82415 0.82538
0.77422 0.78145
0.72395 0.73292
0.67437 0.67773
0.6245 0.62528
0.57404 0.59066
0.52428 0.51912
0.47407 0.47968
0.42424 0.42468
0.37418 0.38076
0.32423 0.31866
0.27369 0.2716
0.22358 0.2211
0.17326 0.17122
0.12247 0.12012
0.071105 0.068971
0.019485 0.015642
};\addlegendentry{ORBGRAND, $L=1$, $4~\text{dB}$}

\end{axis}

\end{tikzpicture}
\vspace{-0.1in}
\caption{The accuracy of soft output when \ac{orbgrand} is used to decode $\RLC(64,57)$. The predicted block error probability is compared to the measured BLER. If the soft output was perfectly accurate, then the data would follow the line $x=y$.}
\vspace{-0.2in}
\label{fig:ble-estimate-changing-noise}
\end{figure}
%\vsapce{-0.1cm}
Fig. \ref{fig:list-estimate-varying-parameters} is similar to Fig. \ref{fig:ble-estimate-changing-noise} except for lists, where a list error occurs when the transmitted codeword is not in the decoding list. The measured list-BLER is plotted against the predicted list-BLER in eq. \eqref{eq:app_approxlist}. The prediction can be seen to be robust to channel condition, list size, and code structure. 

\begin{figure}[t]
\footnotesize
\centering
\begin{tikzpicture}[scale=1]
\begin{axis}[
legend style={at={(1,0)},anchor= south east, font=\scriptsize},
ymin=0,
ymax=1,
width=3.5in,
height=3.5in,
grid=both,
xmin =0,
xmax =1,
xlabel = Predicted list error probability,
ylabel = list-BLER,
]

\addplot[red, mark=o]
table[]{x y
0.96418 0.96197
0.92334 0.92382
0.87503 0.87651
0.82547 0.82863
0.77553 0.78024
0.72554 0.72688
0.6755 0.68423
0.62542 0.63459
0.57543 0.58546
0.5253 0.53848
0.47532 0.48368
0.42513 0.42607
0.37528 0.38719
0.32505 0.3374
0.27504 0.28475
0.22505 0.22811
0.17502 0.17959
0.12508 0.12796
0.075741 0.086516
0.034915 0.04021
};\addlegendentry{RLC, $L=2$, $2~\text{dB}$}

\addplot[blue, mark=o]
table[]{x y
0.96415 0.96185
0.92333 0.92192
0.87503 0.87439
0.82543 0.82623
0.77559 0.77795
0.7255 0.73138
0.67551 0.68424
0.62553 0.63117
0.57536 0.58958
0.52542 0.53711
0.47535 0.48823
0.42519 0.43815
0.37523 0.38502
0.3251 0.33117
0.27496 0.27933
0.22501 0.2353
0.17509 0.18071
0.12495 0.13462
0.075678 0.08896
0.034831 0.041306
};\addlegendentry{RLC, $L=4$, $2~\text{dB}$}

\addplot[red, mark=x]
table[]{x y
0.97378 0.9592
0.92414 0.9247
0.87346 0.87506
0.82358 0.8195
0.77363 0.77682
0.72429 0.72579
0.67456 0.67069
0.62445 0.6211
0.57417 0.58587
0.52417 0.53526
0.47409 0.48312
0.42423 0.42995
0.37398 0.38038
0.32375 0.32527
0.27396 0.28173
0.2236 0.23875
0.17321 0.18269
0.12251 0.12796
0.071125 0.076159
0.019429 0.020504
};\addlegendentry{RLC, $L=2$, $4~\text{dB}$}

\addplot[blue, mark=x]
table[]{x y
0.97475 0.97593
0.92317 0.92182
0.87337 0.87781
0.82414 0.82589
0.774 0.7646
0.72453 0.71852
0.67449 0.67311
0.62415 0.64053
0.57434 0.58159
0.52433 0.53506
0.47434 0.46568
0.4243 0.43567
0.37423 0.38044
0.3241 0.33378
0.2738 0.28676
0.22363 0.22735
0.17331 0.18481
0.12261 0.1338
0.071096 0.08575
0.019401 0.021216

};\addlegendentry{RLC, $L=4$, $4~\text{dB}$}

\addplot[red, mark=o, dashed, mark options=solid]
table[]{x y
0.96098 0.96279
0.91922 0.92543
0.87259 0.8792
0.82387 0.8293
0.77467 0.77888
0.72494 0.72772
0.67505 0.67241
0.62523 0.62199
0.57534 0.56634
0.52528 0.51126
0.4753 0.45617
0.42522 0.40028
0.37518 0.34606
0.32508 0.2945
0.27522 0.24112
0.22506 0.19143
0.17494 0.14313
0.1246 0.099159
0.074203 0.055991
0.02533 0.016624
};\addlegendentry{eBCH, $L=2$, $2~\text{dB}$}

\addplot[blue, mark=o, dashed, mark options=solid]
table[]{x y
0.96094 0.96305
0.91977 0.92441
0.87153 0.87779
0.8233 0.83009
0.77438 0.7783
0.7238 0.72818
0.67498 0.67222
0.62581 0.622
0.57551 0.56588
0.52701 0.51171
0.47572 0.4677
0.42457 0.41912
0.37348 0.35509
0.32627 0.31309
0.27465 0.25977
0.22536 0.20405
0.17488 0.15353
0.12339 0.114376
0.073616 0.060001
0.026226 0.020222
};\addlegendentry{eBCH, $L=4$, $2~\text{dB}$}

\addplot[red, mark=x, dashed, mark options=solid]
table[]{x y
0.9609 0.9638
0.9217 0.9256
0.8720 0.8795
0.8239 0.8301
0.7734 0.7775
0.7254 0.7272
0.6744 0.6740
0.6257 0.6227
0.5761 0.5655
0.5265 0.5099
0.4751 0.4572
0.4240 0.4000
0.3757 0.3464
0.3228 0.2945
0.2754 0.2417
0.2253 0.1938
0.1749 0.1411
0.1236 0.0991
0.0748 0.0543
0.0266 0.0178
};\addlegendentry{eBCH, $L=2$, $4~\text{dB}$}

\addplot[blue, mark=x, dashed, mark options=solid]
table[]{x y
0.96599 0.96435
0.92557 0.92078
0.86104 0.87036
0.82436 0.8329
0.77805 0.77405
0.72876 0.73478
0.66663 0.67442
0.6247 0.6218
0.57134 0.56857
0.52264 0.5164
0.47445 0.46502
0.43985 0.42184
0.37251 0.35432
0.32284 0.31325
0.27474 0.26722
0.22544 0.20456
0.17645 0.15154
0.12012 0.11552
0.070416 0.05589
0.038501 0.0209
};\addlegendentry{eBCH, $L=4$, $4~\text{dB}$}

\addplot[gray,dashed]
table[]{x y
0 0
1 1
};%\addlegendentry{}

\end{axis}

\end{tikzpicture}
\vspace{-0.1in}
\caption{The accuracy of the predicted list error probability compared to the measured list-BLER. Parameters as in Fig. \ref{fig:ble-estimate-changing-noise}, but with varying the channel noise,  list size $L \in \{2, 4\}$, and the code types of $\RLC(64,57)$ and $\eBCH(64,57)$.}
\vspace{-0.2in}
\label{fig:list-estimate-varying-parameters}
\end{figure}
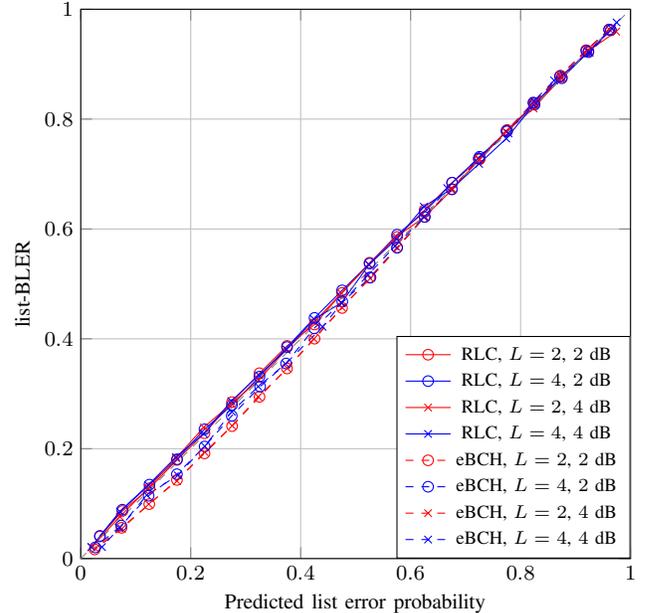

{\it Application to error detection.} A common method used to detect errors is to append a \ac{crc} to a message and declare an erasure at receiver if there is an inconsistency. As GRAND algorithms can decode any code, one use of GRAND soft output is to upgrade CRCs so that they are used for both error correction and error detection, by decoding the block but returning an erasure if the decoding is too unreliable. The \ac{bler} that results from this process is composed of both undetected block errors and erasures.

Fig. \ref{fig:crc-detection} depicts \ac{uer} and \ac{bler} of a $\CRC(64,56)$ code. 
Two methods of error control are compared: 1) the CRC is checked for consistency and an erasure is declared if it fails; 2) \ac{orbgrand} is used to correct errors and an erasure is declared if the estimated error probability is greater than a threshold $\epsilon$. The advantage of 2) is that error correction with GRAND results in significantly reduced \ac{bler} while tailoring the error detection to a target \ac{uer} by modifying the threshold accordingly. 

\begin{figure}[t]
    \footnotesize
    \centering
    \begin{tikzpicture}[scale=1]
\begin{semilogyaxis}[
legend style={at={(1,1)},anchor= north east, font=\scriptsize},
ymin=0,
ymax=1,
width=3.5in,
height=3.5in,
grid=both,
xlabel = Eb/N0 (dB),
ylabel = BLER / UER,
%legend columns=2
%xtick={1,1.25,1.5,1.75,2,2.25,2.5,2.75,3,3.25,3.5,3.75,4}
]

\addplot[red, mark=o]
table[]{x y
3.0 0.8637230853093486
3.5 0.7894321927350573
4.0 0.6873841883688295
4.5 0.5737532326080423
5.0 0.45115872407331913951
};\addlegendentry{CRC}

\addplot[black, mark=o]
table[]{x y
3.0 0.85699648600854594349
3.5 0.68074162972397067595
4.0 0.46084419889502764578
4.5 0.24879759137162238414
5.0 0.10776775695108266329
5.5 0.03598699606745639490
};\addlegendentry{$\epsilon$=0.025}

\addplot[brown, mark=o]
table[]{x y
3.0 0.59960012627591285916
3.5 0.37880464858882123558
4.0 0.19140320822301823411
4.5 0.08221053390013689788
5.0 0.02670713201820940769
5.5 0.00724994925035524727
};\addlegendentry{$\epsilon$=0.1}

\addplot[blue, mark=o]
table[]{x y
3.0 0.27265238879736408695
3.5 0.16212003117692908027
4.0 0.07229778095919828229
4.5 0.02548241159634382844
5.0 0.00874569768097951776
5.5 0.00242706703865567912
};\addlegendentry{$\epsilon$=0.5}

\addplot[red, dashed, mark=x, mark options=solid]
table[]{x y
3.0 0.00097340653350465290
3.5 0.00033980087668626185
4.0 0.00019382699777486606
4.5 0.00009571008267436942
%5.0 0.00001685931809453289
};%\addlegendentry{CRC; UER}

\addplot[black, mark=x, dashed, mark options=solid]
table[]{x y
3.0 0.00018974035929234437
3.5 0.00030615493889147422
4.0 0.00044198895027624310
4.5 0.00038111208506421741
5.0 0.00025975645235027636
5.5 0.00015078771502328162
};%\addlegendentry{Threshold (t=0.025); UER}

\addplot[brown, dashed, mark=x, mark options=solid]
table[]{x y
3.0 0.00526149636956750513
3.5 0.00691754288876591031
4.0 0.00778694907335306021
4.5 0.00360256502629872476
5.0 0.00168605631428089689
5.5 0.00072499492503552471
};%\addlegendentry{Threshold (t=0.1); UER}

\addplot[blue, dashed, mark=x, mark options=solid]
table[]{x y
3.0 0.08237232289950575936
3.5 0.07794232268121589757
4.0 0.03579098067287043994
4.5 0.01624965377158157226
5.0 0.00592450488066354440
5.5 0.00162073247431824413
};%\addlegendentry{Threshold (t=0.5); UER}

\end{semilogyaxis}

\end{tikzpicture}
    \vspace{-0.1in}
    \caption{The \ac{uer} (dotted lines) and \ac{bler} (solid lines) of a $\CRC(64,56)$ code %in an \ac{awgn} channel with \ac{bpsk} modulation, 
    with two methods of error control: 1) CRC used for error detection; 2) \ac{orbgrand} performs error correction using the CRC, then erasures are declared if the predicted block error probability exceeds $\epsilon$.}
    \vspace{-0.3in}
    \label{fig:crc-detection}
\end{figure}
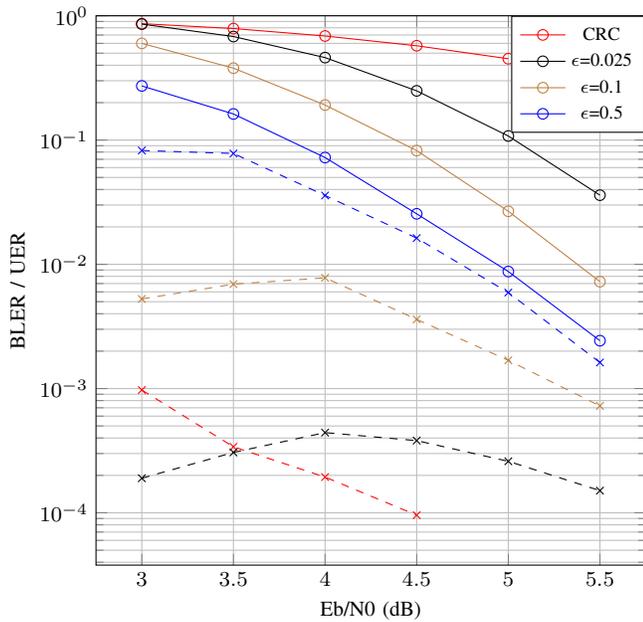

As another example, Fig. \ref{fig:ca-polar-detection} depicts the error detection and correction performance of an $\eBCH(64,51)$ code with \ac{orbgrand} decoding when an error probability threshold $\epsilon$ is used for erasure decisions. Shown for comparison is \ac{cascl} decoding \cite{tal2015_list} of a $(64,51+6)$ 5G polar code \cite{3gpp.38.212} concatenated with the 6-bit CRC \texttt{0x30}, generating a list of $8$ candidates from which the most likely of those whose CRC is consistent is declared to be the decoding. If no element of the CA-SCL list has a CRC that matches, it is treated as an erasure. With an appropriately chosen $\epsilon$, both methods achieve a similar \ac{bler}, but \ac{orbgrand} is shown to achieve a \ac{uer} that is almost an order of magnitude lower than \ac{cascl} in the 2 to 3dB Eb/N0 range. In less noisy conditions, it still achieves a gain of 0.5dB.

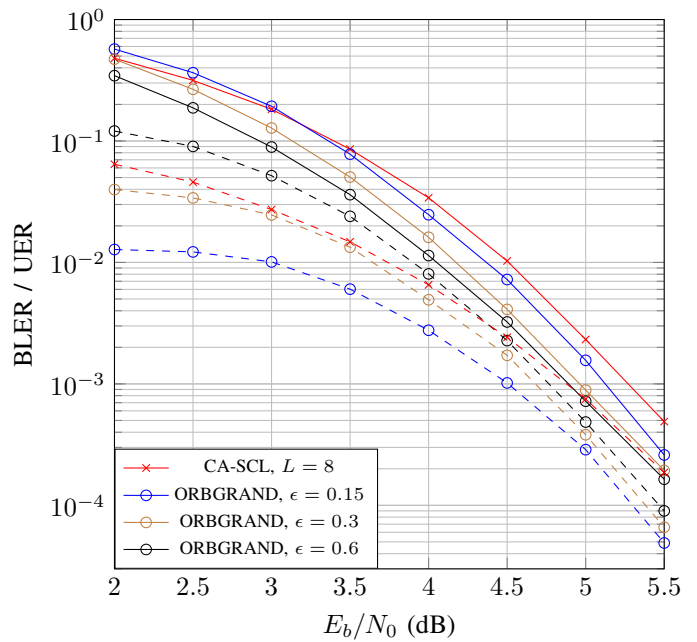
\begin{figure}
    \centering
    \begin{tikzpicture}[scale=1]
\begin{semilogyaxis}[
legend style={at={(0,0)},anchor= south west,font=\scriptsize},
ymin=0.00003,
ymax=1,
width=3.5in,
height=3.5in,
grid=both,
xmin = 2,
xmax = 5.5,
xlabel = $E_b/N_0$ (dB),
ylabel = BLER / UER
]

\addplot[red, mark = x]
table[]{x y
2 0.48021
2.5 0.31561
3 0.18246
3.5 0.085634
4 0.034092
4.5 0.010282
5 0.0023203
5.5 0.00048931
};\addlegendentry{CA-SCL, $L=8$}

\addplot[blue, mark = o]
table[]{x y
2 0.57125
2.5 0.36492
3 0.19309
3.5 0.077915
4 0.024714
4.5 0.00723
5 0.0015633
5.5 0.00026
};\addlegendentry{ORBGRAND, $\epsilon=0.15$}

\addplot[brown, mark = o]
table[]{x y
2 0.47147
2.5 0.26636
3 0.12791
3.5 0.050477
4 0.016106
4.5 0.0041062
5 0.00088874
5.5 0.000194
};\addlegendentry{ORBGRAND, $\epsilon=0.3$}

\addplot[black, mark = o]
table[]{x y
2 0.34494
2.5 0.18747
3 0.089186
3.5 0.036138
4 0.011396
4.5 0.0032351
5 0.00072
5.5 0.000164

};\addlegendentry{ORBGRAND, $\epsilon=0.6$}

\addplot[blue, mark = o, dashed, mark options=solid]
table[]{x y
2 0.01278
2.5 0.012213
3 0.010104
3.5 0.0060259
4 0.0027644
4.5 0.0010183
5 0.00028736
5.5 4.9e-05
};%\addlegendentry{eBCH, proposed, uBLER, $p^*=0.85$}

\addplot[brown, mark = o, dashed, mark options=solid]
table[]{x y
2 0.039817
2.5 0.033974
3 0.024635
3.5 0.01335
4 0.0049254
4.5 0.0017181
5 0.00038143
5.5 6.6e-05
};%\addlegendentry{RLC, proposed, uBLER, $p^*=0.70$}

\addplot[black, mark = o, dashed, mark options=solid]
table[]{x y
2 0.12074
2.5 0.090216
3 0.051792
3.5 0.023972
4 0.0080455
4.5 0.0022734
5 0.000485
5.5 9e-05

};%\addlegendentry{RLC, proposed, uBLER, $p^*=0.40$}

\addplot[red, mark = x, dashed, mark options=solid]
table[]{x y
2 0.064346
2.5 0.045863
3 0.027254
3.5 0.014787
4 0.0065436
4.5 0.0024193
5 0.0007509
5.5 0.00018605
};%\addlegendentry{CA-SCL $L=8$, uBLER}

\end{semilogyaxis}
\end{tikzpicture}
    \vspace{-0.2in}
    \caption{The error detection and correction performance of: 1) a $(64,51+6)$ 5G polar code concatenated with the $6$-bit CRC \texttt{0x30} and with \ac{cascl} decoding; (2) an $\eBCH(64,51)$ code  with \ac{orbgrand} decoding and a threshold-based erasure decision with threshold $\epsilon=0.15$. Solid lines correspond to \ac{bler}, dashed lines to \ac{uer}.}
    \vspace{-0.3in}
    \label{fig:ca-polar-detection}
\end{figure}

\section{Discussion}
We have established that soft input GRAND algorithms can, during their execution, evaluate a predicted likelihood that the decoded block or list is in error. We have derived exact formulae along with readily computable approximations. While the formulae assume random codebooks, we have empirically shown them to make accurate predictions for structured codebooks. 

There are many potential applications of this soft output. It can be used to reduce the rate of undetected errors during decoding or, for URLLC, to do so more cheaply as ORBGRAND can use a \ac{crc} or any other code for both error correction and reduction of undetected errors. In \ac{harq} schemes, the predicted correctness probability could be used to determine whether to request retransmission,  reducing the number of requests. It has been shown \cite{duffy2023_softdetectionphysical} that GRAND soft output can be used to compromise the security of wiretap channels. The confidence measure could also be used to determine the most reliable decoding from a collection of decodings, which could help to select a lead channel in noise recycling \cite{cohen2020_noise,Riaz22}. 

\vspace{-0.1in}
\section{Acknowledgements}
%\vspace{-0.1in}
This work was partially supported by Defense Advanced Research Projects Agency contract number HR00112120008. This publication has emanated from research conducted with the financial support of Science Foundation Ireland under grant number 18/CRT/6049. The opinions, findings and conclusions or recommendations expressed in this material are those of the author(s) and do not necessarily reflect the views of the Science Foundation Ireland. 

\bibliographystyle{IEEEtran}
\bibliography{references}

% Generated by IEEEtran.bst, version: 1.14 (2015/08/26)
\begin{thebibliography}{10}
\providecommand{\url}[1]{#1}
\csname url@samestyle\endcsname
\providecommand{\newblock}{\relax}
\providecommand{\bibinfo}[2]{#2}
\providecommand{\BIBentrySTDinterwordspacing}{\spaceskip=0pt\relax}
\providecommand{\BIBentryALTinterwordstretchfactor}{4}
\providecommand{\BIBentryALTinterwordspacing}{\spaceskip=\fontdimen2\font plus
\BIBentryALTinterwordstretchfactor\fontdimen3\font minus
  \fontdimen4\font\relax}
\providecommand{\BIBforeignlanguage}[2]{{%
\expandafter\ifx\csname l@#1\endcsname\relax
\typeout{** WARNING: IEEEtran.bst: No hyphenation pattern has been}%
\typeout{** loaded for the language `#1'. Using the pattern for}%
\typeout{** the default language instead.}%
\else
\language=\csname l@#1\endcsname
\fi
#2}}
\providecommand{\BIBdecl}{\relax}
\BIBdecl

\bibitem{lin_error_2004}
S.~Lin and D.~J. Costello, \emph{\BIBforeignlanguage{eng}{Error control coding:
  fundamentals and applications}}.\hskip 1em plus 0.5em minus 0.4em\relax
  Pearson/Prentice Hall, 2004.

\bibitem{hashimoto1997_performance}
T.~Hashimoto and M.~Taguchi, ``Performance of explicit error detection and
  threshold decision in decoding with erasures,'' \emph{IEEE Trans. Inf.
  Theory}, vol.~43, no.~5, pp. 1650--1655, 1997.

\bibitem{sauter2023_error}
A.~Sauter, B.~Matuz, and G.~Liva, ``Error detection strategies for
  {CRC}-concatenated polar codes under successive cancellation list decoding,''
  in \emph{CISS}, 2023.

\bibitem{3gpp.38.212}
3GPP, ``{NR; Multiplexing and channel coding},'' {3rd Generation Partnership
  Project (3GPP)}, Technical Specification (TS) 38.21, 2019, version 15.5.0.

\bibitem{durisi2016_toward}
G.~Durisi, T.~Koch, and P.~Popovski, ``{Toward Massive, Ultrareliable, and
  Low-Latency Wireless Communication With Short Packets},'' \emph{Proceedings
  of the IEEE}, vol. 104, no.~9, pp. 1711--1726, 2016.

\bibitem{forney1968_exponential}
G.~Forney, ``Exponential error bounds for erasure, list, and decision feedback
  schemes,'' \emph{{IEEE Trans. Inf. Theory}}, vol.~14, no.~2, pp. 206--220,
  1968.

\bibitem{Hof2010}
E.~Hof, I.~Sason, and S.~Shamai, ``Performance bounds for erasure, list, and
  decision feedback schemes with linear block codes,'' \emph{IEEE Trans. Inf.
  Theory}, vol.~56, no.~8, pp. 3754--3778, 2010.

\bibitem{niu2012crc}
K.~Niu and K.~Chen, ``{CRC}-aided decoding of {P}olar codes,'' \emph{IEEE
  Commun. Letters}, vol.~16, no.~10, pp. 1668--1671, 2012.

\bibitem{tal2015list}
I.~Tal and A.~Vardy, ``List decoding of {P}olar codes,'' \emph{IEEE Trans. Inf.
  Theory}, vol.~61, no.~5, pp. 2213--2226, 2015.

\bibitem{balatsoukas2015llr}
A.~Balatsoukas-Stimming, M.~B. Parizi, and A.~Burg, ``{LLR}-based successive
  cancellation list decoding of {P}olar codes,'' \emph{IEEE Trans. Signal
  Process.}, vol.~63, no.~19, pp. 5165--5179, 2015.

\bibitem{liang2016hardware}
X.~Liang, J.~Yang, C.~Zhang, W.~Song, and X.~You, ``Hardware efficient and
  low-latency ca-scl decoder based on distributed sorting,'' in \emph{IEEE
  GLOBECOM}.\hskip 1em plus 0.5em minus 0.4em\relax IEEE, 2016, pp. 1--6.

\bibitem{forney1973_viterbi}
G.~Forney, ``{The Viterbi algorithm},'' \emph{Proc. of the IEEE}, vol.~61, pp.
  268--278, 1973.

\bibitem{raghavan1998_reliability}
A.~Raghavan and C.~Baum, ``{A reliability output Viterbi algorithm with
  applications to hybrid ARQ},'' \emph{IEEE Trans. Inf. Theory}, vol.~44, pp.
  1214--1216, 1998.

\bibitem{hagenauer1989_viterbi}
J.~Hagenauer and P.~Hoeher, ``{A Viterbi algorithm with soft-decision outputs
  and its applications},'' in \emph{IEEE GLOBECOM}, vol.~3, 1989, pp.
  1680--1686.

\bibitem{yamamoto1980_viterbi}
H.~Yamamoto and K.~Itoh, ``Viterbi decoding algorithm for convolutional codes
  with repeat request,'' \emph{IEEE Trans. Inf. Theory}, vol.~26, no.~5, pp.
  540--547, 1980.

\bibitem{duffy_capacity-achieving_2019}
K.~R. Duffy, J.~Li, and M.~Medard, ``Capacity-achieving guessing random
  additive noise decoding,'' \emph{IEEE Trans. Inf. Theory}, vol.~65, no.~7,
  pp. 4023--4040, 2019.

\bibitem{galligan2021}
K.~Galligan, A.~Solomon, A.~Riaz, M.~M\'edard, R.~T. Yazicigil, and K.~R.
  Duffy, ``{IGRAND: decode any product code},'' in \emph{IEEE GLOBECOM}, 2021.

\bibitem{An22}
W.~An, M.~M\'edard, and K.~R. Duffy, ``Keep the bursts and ditch the
  interleavers,'' \emph{IEEE Trans. Commun.}, vol.~70, no.~6, pp. 3655--3667,
  2022.

\bibitem{solomon20}
A.~Solomon, K.~R. Duffy, and M.~M\'edard, ``Soft maximum likelihood decoding
  using {GRAND},'' in \emph{IEEE Int. Commun. Conf.}, 2020.

\bibitem{Duffy19a}
K.~R. Duffy, M.~M\'edard, and W.~An, ``Guessing random additive noise decoding
  with symbol reliability information ({SRGRAND}),'' in \emph{IEEE Trans.
  Commun.}, vol.~70, no.~1, 2022, pp. 3--18.

\bibitem{duffy2022_ordered}
K.~R. Duffy, W.~An, and M.~Medard, ``Ordered reliability bits guessing random
  additive noise decoding,'' \emph{IEEE Trans. Signal Proc.}, vol.~70, pp. 4528
  -- 4542, 2022.

\bibitem{abbas2021list}
S.~M. Abbas, M.~Jalaleddine, and W.~J. Gross, ``{List-GRAND}: A practical way
  to achieve maximum likelihood decoding,'' \emph{IEEE Trans. Very Large Scale
  Integr. Syst.}, no.~1, pp. 43--54, 2022.

\bibitem{Duffy23ORBGRANDAI}
K.~R. Duffy, M.~Grundei, and M.~M\'edard, ``Using channel correlation to
  improve decoding -- {ORBGRAND-AI},'' 2023, arXiv:2303.07461.

\bibitem{riaz2021multicodegrand}
A.~Riaz, V.~Bansal, A.~Solomon, W.~An, Q.~Liu, K.~Galligan, K.~R. Duffy,
  M.~Medard, and R.~T. Yazicigil, ``{Multi-Code Multi-Rate Universal Maximum
  Likelihood Decoder using GRAND},'' in \emph{ESSCIRC}, 2021.

\bibitem{Riaz23}
A.~Riaz, A.~Yasar, F.~Ercan, W.~An, J.~Ngo, K.~Galligan, M.~M\'edard, K.~R.
  Duffy, and R.~T. Yazicigil, ``A sub-0.8p{J}/b 16.3{G}bps/mm$^2$ universal
  soft-detection decoder using {ORBGRAND} in 40nm {CMOS},'' in \emph{IEEE
  ISSCC}, 2023.

\bibitem{condo2021_highperformance}
C.~Condo, V.~Bioglio, and I.~Land, ``High-performance low-complexity error
  pattern generation for {ORBGRAND} decoding,'' in \emph{{IEEE GLOBECOM}},
  2021.

\bibitem{abbas2020grand}
S.~M. Abbas, T.~Tonnellier, F.~Ercan, and W.~J. Gross, ``{High-Throughput VLSI
  Architecture for GRAND},'' in \emph{IEEE Workshop on Sig. Proc. Sys.}, 2020,
  pp. 681--693.

\bibitem{abbas2021orbgrand}
S.~M. Abbas, T.~Tonnellier, F.~Ercan, M.~Jalaleddine, and W.~J. Gross,
  ``{High-Throughput and Energy-Efficient VLSI Architecture for Ordered
  Reliability Bits GRAND},'' \emph{IEEE Trans. on VLSI Sys.}, vol.~30, no.~6,
  2022.

\bibitem{condo2021fixed}
C.~Condo, ``A fixed latency {ORBGRAND} decoder architecture with {LUT}-aided
  error-pattern scheduling,'' \emph{IEEE Trans. Circuits Sys. I: Regular
  Papers}, vol.~69, no.~5, pp. 2203--2211, 2022.

\bibitem{galligan2023_block}
K.~Galligan, M.~Médard, and K.~R. Duffy, ``{Block turbo decoding with
  ORBGRAND},'' in \emph{CISS}, 2023.

\bibitem{freudenberger2021_reduced}
J.~Freudenberger, D.~Nicolas~Bailon, and M.~Safieh, ``Reduced complexity hard-
  and soft-input {BCH} decoding with applications in concatenated codes,''
  \emph{IET Circ. Device Syst.}, vol.~15, no.~3, pp. 284--296, 2021.

\bibitem{hashimoto1999_composite}
T.~Hashimoto, ``{Composite scheme LR+Th for decoding with erasures and its
  effective equivalence to Forney's rule},'' \emph{IEEE Trans. Inf. Theory},
  vol.~45, no.~1, pp. 78--93, 1999.

\bibitem{tal2015_list}
I.~Tal and A.~Vardy, ``List decoding of polar codes,'' \emph{IEEE Trans. Inf.
  Theory}, vol.~61, no.~5, 2015.

\bibitem{duffy2023_softdetectionphysical}
K.~R. Duffy and M.~Medard, ``Soft detection physical layer insecurity,'' 2023,
  arXiv:2212.05309.

\bibitem{cohen2020_noise}
A.~Cohen, A.~Solomon, K.~R. Duffy, and M.~Médard, ``Noise recycling,'' in
  \emph{IEEE Int. Symp. on Inf. Theory}, 2020.

\bibitem{Riaz22}
A.~Riaz, A.~Solomon, F.~Ercan, M.~Medard, R.~T. Yazicigil, and K.~R. Duffy,
  ``{I}nterleaved noise recycling using {GRAND},'' in \emph{IEEE ICC}, 2022,
  pp. 2483--2488.

\end{thebibliography}

\end{document}